\newtheorem{thm}{Theorem}
\newtheorem{prop}{Proposition}[thm]
\newcommand{\CA}{\mathcal{A}}
\newcommand{\CB}{\mathcal{B}}
\newcommand{\CD}{\mathcal{D}}
\newcommand{\CH}{\mathcal{H}}
\newcommand{\CJ}{\mathcal{J}}
\newcommand{\CM}{\mathcal{M}}
\newcommand{\CN}{\mathcal{N}}
\newcommand{\CP}{\mathcal{P}}
\newcommand{\CZ}{\mathcal{Z}}
\newcommand{\drawRT}[6]{
\pgfmathparse{#2-#1==180};  
\ifnum\pgfmathresult>0 \draw [very thick, shorten >= -0.5, shorten <= -0.5] (#2:#3)--(#1:#3);  
	\draw [very thick, shorten >= -0.6, shorten <= -0.6] (#1:#3) arc (#1:#2:#3); 

\else
	\coordinate (O) at (0,0);
	\coordinate (A) at (#1:#3);
	\coordinate (B) at (#2:#3);
	\coordinate (Ap) at ([shift={({#1-90}:#3)}]A);
	\coordinate (App) at ([shift={({#1+90}:#3)}]A);
	\coordinate (Bp) at ([shift={({#2-90}:#3)}]B);
	\coordinate (Bpp) at ([shift={({#2+90}:#3)}]B);
	\coordinate (X) at (intersection of Ap--App and Bp--Bpp);
	\pgfmathparse{#4==1};
	\ifnum\pgfmathresult>0 \draw [very thick, shorten >= -0.6, shorten <= -0.6] (A) arc (#1:#2:#3); \fi
	\path (X); \pgfgetlastxy{\XCoord}{\YCoord}; 
	\pgfmathsetmacro{\rotateamt}{\ifdim\XCoord<0pt {atan(\YCoord/\XCoord)} \else \ifdim\XCoord>0pt {atan(\YCoord/\XCoord)+180} \else {-\YCoord/abs(\YCoord)*90}\fi \fi}

	\begin{scope}[rotate=\rotateamt]
		\pgfmathparse{#5==1}; 
		\ifnum\pgfmathresult>0 \fill[gray, opacity=0.2, #6]
		let
	 		\p1 = (A),
			\p2 = (B),
	 		\p3 = (X),
			\p4 = (A |- X), 
			\p5 = (B |-X), 
			\n1 = {veclen((\x3-\x1),(\y3-\y1))}, 
			\n2 = {\y1-\y3}, 
			\n3 = {\y2-\y3}  
			in
			([shift=({#1-\rotateamt}:#3)]O) arc ({#1-\rotateamt}:{#2-\rotateamt}:#3) -- ([shift=({asin(\n3/\n1)}:\n1)]X) arc ({asin(\n3/\n1)}:{asin(\n2/\n1)}:\n1);\fi 

		\draw [very thick, shorten >= 0.0, shorten <= 0.0]
			let
	 		\p1 = (A),
			\p2 = (B),
	 		\p3 = (X),
			\p4 = (A |- X), 
			\p5 = (B |-X), 
			\n1 = {veclen((\x3-\x1),(\y3-\y1))}, 
			\n2 = {\y1-\y3}, 
			\n3 = {\y2-\y3}  
			in
			([shift=({asin(\n3/\n1)}:\n1)]X) arc ({asin(\n3/\n1)}:{asin(\n2/\n1)}:\n1); 
	\end{scope}
\fi
}
\newcommand{\drawRTdashed}[6]{
\pgfmathparse{#2-#1==180};  
\ifnum\pgfmathresult>0 \draw [dashed, shorten >= -0.5, shorten <= -0.5] (#2:#3)--(#1:#3);  
	\draw [dashed, shorten >= -0.6, shorten <= -0.6] (#1:#3) arc (#1:#2:#3); 

\else
	\coordinate (O) at (0,0);
	\coordinate (A) at (#1:#3);
	\coordinate (B) at (#2:#3);
	\coordinate (Ap) at ([shift={({#1-90}:#3)}]A);
	\coordinate (App) at ([shift={({#1+90}:#3)}]A);
	\coordinate (Bp) at ([shift={({#2-90}:#3)}]B);
	\coordinate (Bpp) at ([shift={({#2+90}:#3)}]B);
	\coordinate (X) at (intersection of Ap--App and Bp--Bpp);
	\pgfmathparse{#4==1};
	\ifnum\pgfmathresult>0 \draw [dashed, shorten >= -0.6, shorten <= -0.6] (A) arc (#1:#2:#3); \fi
	\path (X); \pgfgetlastxy{\XCoord}{\YCoord}; 
	\pgfmathsetmacro{\rotateamt}{\ifdim\XCoord<0pt {atan(\YCoord/\XCoord)} \else \ifdim\XCoord>0pt {atan(\YCoord/\XCoord)+180} \else {-\YCoord/abs(\YCoord)*90}\fi \fi}

	\begin{scope}[rotate=\rotateamt]
		\pgfmathparse{#5==1}; 
		\ifnum\pgfmathresult>0 \fill[gray, opacity=0.2, #6]
		let
	 		\p1 = (A),
			\p2 = (B),
	 		\p3 = (X),
			\p4 = (A |- X), 
			\p5 = (B |-X), 
			\n1 = {veclen((\x3-\x1),(\y3-\y1))}, 
			\n2 = {\y1-\y3}, 
			\n3 = {\y2-\y3}  
			in
			([shift=({#1-\rotateamt}:#3)]O) arc ({#1-\rotateamt}:{#2-\rotateamt}:#3) -- ([shift=({asin(\n3/\n1)}:\n1)]X) arc ({asin(\n3/\n1)}:{asin(\n2/\n1)}:\n1);\fi 

		\draw [dashed, shorten >= 0.0, shorten <= 0.0]
			let
	 		\p1 = (A),
			\p2 = (B),
	 		\p3 = (X),
			\p4 = (A |- X), 
			\p5 = (B |-X), 
			\n1 = {veclen((\x3-\x1),(\y3-\y1))}, 
			\n2 = {\y1-\y3}, 
			\n3 = {\y2-\y3}  
			in
			([shift=({asin(\n3/\n1)}:\n1)]X) arc ({asin(\n3/\n1)}:{asin(\n2/\n1)}:\n1); 
	\end{scope}
\fi
}
\title{Entanglement Wedge Reconstruction using the Petz Map}
\author[1]{Chi-Fang Chen \thanks{\tt chifangc@stanford.edu}}
\author[2]{Geoffrey Penington \thanks{\tt geoffp@stanford.edu}}
\author[1,2,3]{Grant Salton \thanks{\tt gsalton@caltech.edu}}
\affil[1]{\small \em Department of Physics, Stanford University, Stanford CA 94305, USA}
\affil[2]{\small \em Stanford Institute for Theoretical Physics, Stanford University, Stanford CA 94305, USA}
\affil[3]{\small \em Institute for Quantum Information and Matter, Caltech, Pasadena, CA 91125, USA}
\begin{document}
\date{April 2, 2019}
\maketitle

\begin{abstract}
At the heart of recent progress in AdS/CFT is the question of subregion duality, or entanglement wedge reconstruction: which part(s) of the boundary CFT are dual to a given subregion of the bulk?  This question can be answered by appealing to the quantum error correcting properties of holography, and it was recently shown that robust bulk (entanglement wedge) reconstruction can be achieved using a universal recovery channel known as the \emph{twirled Petz map}.  In short, one can use the twirled Petz map to recover bulk data from a subset of the boundary. However, this map involves an averaging procedure over bulk and boundary modular time, and hence it can be somewhat intractable to evaluate in practice. We show that a much simpler channel, the Petz map, is sufficient for entanglement wedge reconstruction for any code space of fixed finite dimension -- no twirling is required. Moreover, the error in the reconstruction will always be non-perturbatively small. From a quantum information perspective, we prove a general theorem extending the use of the Petz map as a general-purpose recovery channel to subsystem and operator algebra quantum error correction.
\end{abstract}

\tableofcontents
\newpage

\section{Introduction}
The AdS/CFT correspondence -- a duality between a gravitational theory in asymptotically AdS space, and a conformal field theory in one fewer spatial dimension --  has enjoyed extraordinary success in the twenty years since it was first proposed~\cite{Maldacena1999}.  An important problem in AdS/CFT is that of subregion duality: which subregion of the CFT (if any) is dual to a given subregion of the bulk spacetime?  Recently, it was discovered that the bulk-to-boundary map in AdS/CFT defines a quantum error correcting code~\cite{almheiri2015bulk,pastawski2015holographic}.  In light of this profound revelation, the problem of subregion duality can be rephrased: which subregion of the bulk spacetime can be `reconstructed' from a given subregion of the boundary? 

Over the course of the last five years, this question has been answered~\cite{dong2016reconstruction,cotler2017entanglement,faulkner2017bulk}: the bulk region encoded in an arbitrary boundary region $A$ is the so-called \emph{entanglement wedge} of $A$, denoted $a$. Within a single, static timeslice, the entangement wedge\footnote{This definition is valid only within a single, static timeslice of a static bulk spacetime, or at a moment of time reversal symmetry. More generally, and more formally, the covariant Ryu-Takayanagi surface $\chi_A$ is defined to be the smallest surface of \emph{extremal} area homologous to $A$~\cite{hubeny2007covariant}. The entanglement wedge is then the bulk domain of dependence of \emph{any} achronal bulk surface bounded by $A$ and $\chi_A$. At higher orders in perturbation theory, one should use the quantum extremal surface, which extremizes the Ryu-Takayanagi formula $\CA/4 G_N + S_\text{bulk}$, where $S_\text{bulk}$ is the bulk entanglement entropy, rather than simply the classical area $\CA$~\cite{engelhardt2015quantum,dong2018entropy}.} of $A$ is the bulk region bounded by $A$ itself and its Ryu-Takayanagi surface $\chi_A$~\cite{ryu2006holographic}, which is the minimal area bulk surface anchored to the boundary of $A$; see Figure~\ref{EWre}. Given any bulk operator $\phi_a$ lying within the entanglement wedge $a$, there exists a boundary operator $\Phi_A$ acting only on the boundary region $A$, which approximately reproduces the action of the bulk operator $\phi_a$. The task of finding such an operator $\Phi_A$ is known as entanglement wedge reconstruction. 

The conjecture of entanglement wedge reconstruction was developed in~\cite{czech2012gravity, headrick2014causality, wall2014maximin} and established with increasing levels of rigour in~\cite{jafferis2016relative,dong2016reconstruction,faulkner2017bulk,cotler2017entanglement}. It was shown in~\cite{dong2018entropy,hayden2018learning} that the error in the reconstruction can be made non-perturbatively small at large boundary gauge group rank $N$, or equivalently small gravitational coupling $G_N$. 

The realization that bulk reconstruction can be rephrased in the language of quantum error correction~\cite{almheiri2015bulk} paved the way for most of this success. Bulk operators in AdS/CFT are only well defined within the ``code subspace'' $\mathcal{H}_{\text{code}}$ of states with the correct smooth bulk geometry. If we let $J : \mathcal{H}_{\text{code}} \to \mathcal{H}_{CFT}$ be the isometry embedding this code subspace into the larger CFT Hilbert space, entanglement wedge reconstruction can be rephrased as the task of finding a decoding channel $\mathcal{D}$ that can recover from the noisy channel $\mathcal{N} = [J(\cdot)J^\dagger]_{A}$, where $\rho_A$ is the restriction of the boundary state $\rho$ to region $A$. More specifically, entanglement wedge reconstruction is equivalent to the existence of a decoding channel $\mathcal{D}$ such that, for all states $\rho$ in the bulk code space,
\begin{align}
\mathcal{D} \circ \mathcal{N} (\rho) \approx \rho_a,\label{decode}
\end{align}
where $\rho_a$ is the restriction of the bulk state $\rho$ to the entanglement wedge $a$\footnote{Here, restriction can be thought of as a partial trace, but in an operator algebra quantum error correction picture, this is really a restriction of a state to a subalgebra.}. If such a decoding channel exists, then we can use the adjoint channel $\mathcal{D}^\dagger$, defined by 
\begin{align}
\tr [\mathcal{D}^\dagger(\phi) \,\sigma] = \tr[ \phi \,\mathcal{D}(\sigma)],
\end{align}
for all operators $\phi$ and states $\sigma$, to map bulk operators $\phi_a$ to boundary reconstructions $\Phi_A = \mathcal{D}^\dagger(\phi_a)$ with support only in region $A$. Since
\begin{align}
\tr(\Phi_A \,\rho) = \tr[\phi_a \,\mathcal{D} \circ \mathcal{N}(\rho)] \approx \tr(\phi_a \,\rho), \label{adjoint}
\end{align}
the expectation values of $\phi_a$ and $\Phi_A$ approximately agree for all states $\rho \in S(\mathcal{H}_{\text{code}})$. It can be shown that this is also true for higher point correlators~\cite{cotler2017entanglement}.

\begin{figure}[t]
\centering
\begin{tikzpicture}
	\fill [gray, opacity=0.3] (0,0) circle (2); 
	\draw [use as bounding box] (0,0) circle (2); 
	\drawRTdashed{43}{136}{2}{0}{1}{fill opacity=1, color=gray};
	\drawRT{136}{223}{2}{0}{1}{fill opacity=1, color=white};
	\drawRTdashed{223}{316}{2}{0}{1}{fill opacity=1, color=gray};
	\drawRT{-43}{43}{2}{0}{1}{fill opacity=1, color=white};
	\draw [very thick, shorten >= -0.6, shorten <= -0.6] (43:2) arc (43:137:2);
	\draw [very thick, shorten >= -0.6, shorten <= -0.6] (223:2) arc (223:317:2);
	
	\draw[-latex] (0,0) -- (-0.65,2.15);
	\draw (-0.45, 0) node {$\mathcal{D}^\dagger_{A}$};
	\draw[-latex] (0,0) -- (-1,-2);
	\draw (0.4,0) node{$\phi_{a}$};
	\fill (0,0) circle (0.05);
	\node at (0,2.25) {$A_1$};
	\node at (0,-2.25) {$A_2$};
	\draw [use as bounding box] (0,0) circle (2); 
\end{tikzpicture}
\vspace{0.3cm}
\caption{An operator $\phi_{a}$, acting on the entanglement wedge $a$ of $A = A_1 \cup A_2$, can be reconstructed on the boundary region $A$ by the map $\mathcal{D}^\dagger_{A}: M_{a} \rightarrow M_{A}$. The solid interior curves represent the RT surface of $A$ and the entire shaded region forms the entanglement wedge $a$ (restricted to a single timeslice).  The darker gray areas are the entanglement wedges of $A_1$ and $A_2$ individually, and also together form the causal wedge of $A$. Since the operator $\phi_a$ is not in the causal wedge of $A$, we cannot reconstruct it simply by using the bulk and boundary equations of motion; the more sophisticated machinery of quantum error correction is required. Moreover, $\phi_a$ can only be reconstructed on $A = A_1 \cup A_2$; neither $A_1$ nor $A_2$ alone contains any information about $\phi_a$.\label{EWre}}
\end{figure}

Interestingly, the entanglement wedge $a$ may contain regions outside of the `causal wedge' of $A$ (the intersection of the past and future of the boundary domain of dependence of $A$). Given a bulk operator $\phi$ in the causal wedge of a region $A$, it is well-understood how to reconstruct the operator $\phi$ within the boundary region $A$, given only the bulk and boundary equations of motion, using the so-called HKLL procedure~\cite{hamilton2008local,Morrison2014}. However, it was only by introducing the tool of quantum error correction that we have begun to understand that the entire entanglement wedge (rather than just the causal wedge) can indeed be reconstructed from region $A$.

The first clue that a boundary region $A$ encodes more than just the causal wedge actually comes from the Ryu-Takayanagi (RT) formula~\cite{ryu2006holographic,lewkowycz2013generalized}. Including the leading quantum correction~\cite{faulkner2013quantum}, the RT formula states that the entanglement entropy $S_A$ of any boundary region $A$ is given by 
\begin{align} \label{eq:RT}
S_A = \frac{\CA(\chi_A)}{4 G_N} + S_\text{bulk},
\end{align}
where $\CA(\chi_A)$ is the area of the RT surface $\chi_A$ and $S_\text{bulk}$ is the bulk entanglement entropy associated to the entanglement wedge of $A$. The entanglement entropy, although not an actual observable, is therefore a quantity that depends only on the reduced density matrix of the state on region $A$, but depends on the entire entanglement wedge in the bulk. Somewhat remarkably, \eqref{eq:RT} alone is, in fact, sufficient to imply the existence of decoding channels $\CD$ that can be used for entanglement wedge reconstruction~\cite{dong2016reconstruction, cotler2017entanglement}. The key intermediate step, which was shown in~\cite{jafferis2016relative}, is that \eqref{eq:RT} implies an approximate equality between bulk and boundary relative entropies.

Unfortunately, even though it is, at this point, very well established that entanglement wedge reconstruction is possible in principle (and hence that decoding channels $\mathcal{D}$ must exist), it has proved somewhat challenging to find constructions that work for bulk operators lying outside the causal wedge (and hence for which we cannot use the HKLL prescription) and that are both explicit and practical. An explicit, if somewhat impractical, general construction was given in~\cite{almheiri2015bulk,dong2016reconstruction}. However, this construction relies on the unphysical assumption of exact quantum error correction, which does not exist at finite $N$. 

It was shown in~\cite{faulkner2017bulk} that the evolution of bulk operators in bulk modular time is related via the extrapolate dictionary to the evolution of boundary operators in boundary modular time. Since bulk modular evolution should be linear in the free field approximation $N \to \infty$, one might hope to expand a bulk operator at any point in the entanglement wedge in terms of the modular evolution of operators at the boundary of the wedge, and thus derive an explicit formula for the boundary representation of the bulk operator. However, as yet, the details of this expansion remain unknown, and it is not even clear how to show rigorously that one should exist at all.

Finally, it was demonstrated in~\cite{cotler2017entanglement}, using the tools of approximate operator algebra quantum error correction, that robust entanglement wedge reconstruction can be achieved using the so-called \emph{twirled Petz map}~\cite{junge2018universal}, even at finite $N$. The twirled Petz map is an example of a ``universal recovery channel'' -- a general purpose decoding map that lets one approximately recover from the action of a quantum channel.  That is, given a quantum channel $\mathcal{N}$ and a fixed state $\sigma$, the goal is to find a recovery channel $\mathcal{R}_{\sigma, \mathcal{N}}$ such that $\mathcal{R_{\sigma, \mathcal{N}}}\circ\mathcal{N}[\rho]\approx\rho$ for all $\rho$.  The twirled Petz map is one such recovery channel $\mathcal{R}_{\sigma, \mathcal{N}}$, defined to be
\begin{align}\label{twirledPetz}
\mathcal{R}_{\sigma, \mathcal{N}} (\rho) = \int \, dt \, \frac{\pi}{2}\,(\cosh (\pi t) + 1)^{-1} \,\,\sigma^{\frac{1 - it}{2}} \mathcal{N}^\dagger\left( [\mathcal{N}(\sigma)]^{- \frac{1 - it}{2}} \rho \,[\mathcal{N}(\sigma)]^{- \frac{1 + it}{2}}\right) \sigma^{\frac{1+ it}{2} }.
\end{align}
If we replace $\sigma$ by the maximally mixed state $\tau$, the expression simplifies significantly.  We can use the twirled Petz map for bulk reconstruction by setting the channel $\mathcal{N}$ to be $\mathcal{N} = [ J(\cdot)J^\dagger]_A$.  With the simple choice $\sigma=\tau$,  this leads to the boundary reconstruction $\Phi_A$ of a bulk operator $\phi_a$ as
\begin{align}\label{twirledOp}
\Phi_A = \mathcal{R}_{\tau, \mathcal{N}}^\dagger (\phi_a) = \frac{1}{d_{\text{code}}} \int \, dt \, \frac{\pi}{2}\,(\cosh (\pi t) + 1)^{-1} \,\, \tau_A^{- \frac{1 - it}{2}} [J \phi_a J^\dagger]_A \tau_A^{- \frac{1 + it}{2}},
\end{align}
where $\tau_A = \CN(\tau)$.  Even though choosing the reference state to be maximally mixed has simplifed the expression, it still involves a twirling or averaging over boundary modular time with the precisely chosen weighting $\pi/2\,(\cosh (\pi t) + 1)^{-1}$.

In this paper, we will show that such averaging is unnecessary for code spaces of any fixed finite dimension in the semiclassical limit $N \to \infty$ and $G_N \to 0$. Instead it is sufficient to use the much simpler \emph{Petz map}~\cite{ohya2004quantum} reconstruction
\begin{align}
\Phi_A = \frac{1}{d_{\text{code}}} \tau_A^{-1/2} [J \phi_a J^\dagger]_A \,\tau_A^{-1/2}.
\label{PetzOp}
\end{align}
We are hopeful that this more tractable recovery map should prove significantly easier to evaluate explicitly; we discuss the challenges and prospects of doing so in Section \ref{sec:discuss}. For other examples of situations where twirling is unnecessary and the Petz map itself is sufficient, see \cite{li2018squashed, alhambra2018work, alhambra2017dynamical, lemm2017information}.

Our strategy for proving the efficacy of the Petz map for entanglement wedge reconstruction builds on work by Barnum and Knill~\cite{barnum2002reversing}, who showed that, for ordinary subspace quantum error correction, the Petz map will always have an \emph{average} decoding error that is almost as small as the average error of the optimal decoding channel. Roughly speaking, the Petz map is always `pretty good'. We extend these results to subsystem and operator algebra quantum error correcting codes and then show that the average error can always be used to bound the worst-case error, so long as the dimension of the code space does not grow too quickly in the limit of large $N$. (We discuss very large code spaces such as those of black hole microstates briefly in Section \ref{sec:discuss}.)

In Section \ref{sec:entwedge}, we formalize the problem of entanglement wedge reconstruction in the language of quantum error correction and show how to adapt the results of Barnum and Knill to prove that reconstruction is possible using the Petz map.  Our main technical result is a general theorem in quantum error correction, the proof of which is given in Section \ref{fullproof}, and an application of which simplifies the problem of entanglement wedge reconstruction. Section \ref{sec:discuss} consists of a brief discussion of potential applications and extensions of our work.

\section{Entanglement Wedge Reconstruction and the Petz Map} \label{sec:entwedge}
In order to apply information-theoretic techniques to the problem of entanglement wedge reconstruction, we first need to rephrase our task in the language of quantum information. We employ the same framework used in~\cite{cotler2017entanglement} -- finite-dimensional approximate operator algebra quantum error correction.

The AdS/CFT correspondence is a duality between a boundary conformal field theory with Hilbert space $\mathcal{H}_{CFT}$, and a bulk quantum gravity theory. In principle, if AdS/CFT is supposed to be a true duality between theories, the `bulk' Hilbert space should be isomorphic to the boundary Hilbert space $\mathcal{H}_{CFT}$. However, a complete, non-perturbative, microscopic description of the entire Hilbert space from a purely bulk perspective, if one exists, remains unknown. Moreover, any such Hilbert space would be dominated by large black holes. Instead, we are normally only interested in a small `code subspace' $\mathcal{H}_{\text{code}}$ of states with a smooth semiclassical bulk geometry; for example, we might consider small bulk perturbations about the vacuum state. We therefore have an isometry $J: \mathcal{H}_{\text{code}} \to \mathcal{H}_{CFT}$. Equivalently, we can consider the quantum channel $\mathcal{J}(\cdot) = J(\cdot)J^\dagger$ which maps bulk density matrices to boundary density matrices.  As it turns out, none of our results rely on $\mathcal{J}$ being an isometry as opposed to a more general quantum channel. 

For simplicity, we assume that both $\mathcal{H}_{\text{code}}$ and $\mathcal{H}_{CFT}$ are finite-dimensional. In the case of $\mathcal{H}_{\text{code}}$, this is justified by the fact that we cannot include arbitrarily high energy excitations in the bulk without causing significant backreaction and eventually creating a black hole. In the case of $\mathcal{H}_{CFT}$, we should be able to regularize the boundary theory in the UV, while only affecting bulk physics close to the boundary. Of course, the real value of these assumptions for our purposes is that they allow us to apply known results from the large literature on finite-dimensional quantum error correction.

We denote the algebra of observables on the Hilbert space $\mathcal{H}_{\text{code}}$ by $\CB(\mathcal{H}_{\text{code}})$ and the algebra of observables on $\mathcal{H}_{CFT}$ by $\CB(\mathcal{H}_{CFT})$. The entanglement wedge $a$ has an associated von Neumann subalgebra $ \mathcal{M}_a\stackrel{i}{\hookrightarrow} \CB(\mathcal{H}_{\text{code}})$, consisting of bulk observables that act only on $a$; similarly, the boundary region $A$ is associated with a von Neumann subalgebra $\mathcal{M}_A \stackrel{i}{\hookrightarrow} \CB(\mathcal{H}_{CFT})$. We use the notation from~\cite{cotler2017entanglement}, where the space of density matrices on a von Neumann subalgebra $\mathcal{M}$ acting on a Hilbert space $\mathcal{H}$ is denoted by $S(\mathcal{M}) \cong S(\mathcal{H}) \cap \mathcal{M}$. This space is isomorphic to the space of positive normalized linear functionals on the algebra. See the appendix of~\cite{cotler2017entanglement} for more details.

The question of entanglement wedge reconstruction can then be rephrased as the question of whether the channel $\mathcal{N} = [\mathcal{J}(\cdot)]_A$ forms an approximate error-correcting code for the algebra $\mathcal{M}_a$. Here, the restriction channel $[\cdot]_A$ simply projects the density matrix onto the algebra $\mathcal{M}_A$. In other words, entanglement wedge reconstruction is possible if (and only if) there exists a decoding channel $\mathcal{D}: S(\mathcal{M}_A) \to S(\mathcal{M}_a)$ such that
\begin{align}
\mathcal{D} \circ \mathcal{N} (\rho) \approx \rho_a,
\end{align}
for all states $\rho \in S(\mathcal{H}_{\text{code}})$; the restriction $\rho_a$ is the projection of $\rho$ onto $\mathcal{M}_a$. For subsystem algebras, this corresponds to taking a partial trace over the complementary subsystem and hence agrees with the usual notion of a reduced density matrix; operator algebra quantum error correction therefore generalizes subsystem quantum error correction. 

In the Heisenberg (adjoint) picture, this condition becomes
\begin{align}
\mathcal{N}^\dagger \circ \mathcal{D}^\dagger ( \phi_a)  = \mathcal{J}^\dagger \circ \mathcal{D}^\dagger( \phi_a )\approx \phi_a.
\end{align}
Note that, since the adjoint of the restriction channel is simply the embedding of the subalgebra in the larger algebra of observables, $\mathcal{N}^\dagger(O_A) = \mathcal{J}^\dagger(O_A)$ for all operators $O_A \in \mathcal{M}_A$. In other words, $\Phi_A = \mathcal{D}^\dagger (\phi_a)$ acts in approximately the same way as $\phi_a$:
\begin{align}
\tr (\Phi_A \CJ(\rho)) \approx \tr (\phi_a \rho)
\end{align}
The complete setup, in both the Schr\"{o}dinger and Heisenberg pictures, is shown in Figure \ref{fig:squares}.

\begin{figure}
\centering
	\begin{tikzcd}[column sep=5em, row sep=3em]
	  \mathcal M_a \arrow[r, yshift=0.7ex, hookrightarrow, "\text{$i_a$}"]
	  \arrow[d, dashed, "\mathcal D^\dagger", swap] &
	  \mathcal B(\CH_\text{code}) 
	  \\
	  \mathcal M_A \arrow[r, swap, hookrightarrow, "\text{$i_A$}"] 
	  \arrow[ru, "\mathcal N^\dagger"]
	  &
	  \mathcal B(\CH_\text{CFT}) \arrow[u, swap, "\mathcal J^\dagger"]
	\end{tikzcd}
	\qquad\qquad
	\begin{tikzcd}[column sep=3em, row sep=3em]
	  S(\mathcal M_a) 
	  &
	  \arrow[l, yshift=0.7ex, "\text{$res_a$}", swap] S(\mathcal H_\text{code}) \arrow[d, "\mathcal J"] \arrow[dl, "\mathcal N"] 
	  \\
	  S(\mathcal M_A) \arrow[u, dashed, "\mathcal D"] &
	  \arrow[l, "\text{$res_A$}"] S(\mathcal H_\text{CFT})
	\end{tikzcd}
	\caption{In the Heisenberg picture, $\CM_a\stackrel{i}{\hookrightarrow}\CB(\CH_\text{code}) $ and $\CM_A\stackrel{i}{\hookrightarrow}\CB(\CH_{CFT})$ are von Neumann subalgebras acting on the code space $\CH_\text{code}$ and CFT Hilbert space $\CH_{CFT}$ respectively. The Heisenberg channel $\CJ^\dagger = J^\dagger(\cdot)J$ maps boundary observables to their projection in the code space. The task of entanglement wedge reconstruction is to find a Heisenberg decoding channel $\CD^\dagger : \CM_a \to \CM_A$ that maps bulk observables $\phi_a$ in $\CM_a$ to boundary observables $\Phi_A$ in $\CM_A$. When projected into the code space using $\CJ^\dagger$, the boundary observable $\Phi_A$ should reproduce the original bulk observable $\phi_a$. In the Schr\"{o}dinger picture, the directions of all channels are reversed. The channel $\CJ$ now maps bulk states to the corresponding boundary states. The Heisenberg channels $i_a$ and $i_A$, which embed the von Neumann subalgebras $\CM_a$ and $\CM_A$ into the larger algebras of observables $\CB(\CH_\text{code})$ and $\CB(\CH_{CFT} )$, are the adjoints of the restriction maps onto $S(\CM_a)$ and $S(\CM_A)$ respectively. Finally, the decoding channel $\CD: S(\CM_A) \to S(\CM_a)$ satisfies $\CD[\CJ(\cdot)_A] \approx (\cdot)_a$. }
	\label{fig:squares}
\end{figure}

It was argued in~\cite{cotler2017entanglement} that the twirled Petz map provides an example of a decoding map with an error that is perturbatively suppressed in $1/N$. It was then shown in~\cite{hayden2018learning} that there must exist some decoding channel $\mathcal{D}$ with a non-perturbatively small error; however, this argument was non-constructive. Both results relied heavily on the approximate equality between bulk and boundary relative entropies found in~\cite{jafferis2016relative}. A refined statement of this approximate relative entropy equality was derived in~\cite{dong2018entropy}, which allows one to show the existence of a decoding channel that is accurate to all orders in perturbation theory. Here, we shall simply take as our starting assumption the existence of \emph{some} good decoding channel $\CD'$; we will not need to know any details of this channel. We can therefore use the result of~\cite{hayden2018learning} to assume that the decoding error when using this channel is non-perturbatively small. The following theorem, which we prove in Section~\ref{fullproof}, then implies that the Petz map is also a good decoding channel:
\begin{thm}
\label{mainthm}
Let $\CM_a \stackrel{i}{\hookrightarrow}  \CB(\CH_\text{code})$ be a von Neumann subalgebra acting on the code space $\CH_\text{code}$ with dimension $d_{\text{code}}$, let $\CN$ be a quantum channel, and suppose that there exists a channel $\CD'$ such that $$\lVert\CD'\circ\CN(\rho) - \rho_{a}\rVert_1< \delta$$. Let 
$$\CP_{\tau, \CN} := \frac{1}{d_{\text{code}}} \CN^\dagger\left[\CN(\tau)^{-1/2}(\cdot)\CN(\tau)^{-1/2}\right]$$ 
be the Petz map with maximally mixed reference state $\tau$. Then 
\begin{align}
\left\lVert\CP_{\tau, \CN} \circ \CN (\rho)|_a-\rho_a\right\rVert_1 \le d_{code}\sqrt{8\delta}.
\end{align}
\end{thm}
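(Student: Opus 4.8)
The plan is to reduce the statement to the Barnum--Knill estimate~\cite{barnum2002reversing}, which asserts that the transpose (Petz) recovery channel has \emph{average} decoding fidelity at least the square of the average fidelity of the optimal recovery. The chain is as follows. Introduce a reference system $\CH_R\cong\CH_{\text{code}}$ together with the maximally entangled state $\Phi$ on $\CH_R\otimes\CH_{\text{code}}$, whose marginal on the code space is the maximally mixed state $\tau$, and for a recovery channel $\CR$ on the output of $\CN$ write $f(\CR)$ for the fidelity between $(\mathrm{id}_R\otimes[\CR\circ\CN(\cdot)]_a)(\Phi)$ and $(\mathrm{id}_R\otimes\mathrm{res}_a)(\Phi)$ --- the entanglement fidelity with which $\CR$ recovers the $\CM_a$-data from $\CN$. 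The key elementary observation is that, because $\tau^{1/2}$ is a multiple of the identity, the map $\CP_{\tau,\CN}$ in the theorem \emph{is} precisely the transpose channel of~\cite{barnum2002reversing} built from $\CN$ with reference state $\tau$. Since the hypothesised $\CD'$ has worst-case error below $\delta$, it has average error below $\delta$ a fortiori; by the Fuchs--van de Graaf inequalities and the relation between average and entanglement fidelity (applied to $\CM_a$, a direct sum of full matrix algebras tensored with identities), this gives $\sup_{\CR} f(\CR)\ge f(\CD')\ge 1-O(\delta)$.

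The substantive step, and the one I expect to be the main obstacle, is to establish the Barnum--Knill inequality $f(\CP_{\tau,\CN})\ge\bigl(\sup_{\CR} f(\CR)\bigr)^2$ in the operator-algebra setting. In~\cite{barnum2002reversing} the code is a subspace, whereas here the target of the recovery is a von Neumann subalgebra $\CM_a\cong\bigoplus_\alpha \CB(\CH_{a_\alpha})\otimes\mathbbm{1}_{\CH_{\bar a_\alpha}}$, so one must verify that the variational characterisation of the optimal recovery --- and the inequality bounding it by the transpose channel --- survives restriction to $\CM_a$. I would attack this either block by block over the summands $\alpha$, each of which is an honest subsystem code already within reach of the Barnum--Knill technology, or by dilating the problem to a subspace code on an enlarged Hilbert space and invoking the original theorem as a black box. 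Granting this, $f(\CP_{\tau,\CN})\ge\bigl(1-O(\delta)\bigr)^2\ge 1-O(\delta)$.

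It remains to convert this average-case estimate back into the worst-case bound. From $f(\CP_{\tau,\CN})\ge 1-O(\delta)$ and Fuchs--van de Graaf one gets $\lVert(\mathrm{id}_R\otimes[\CP_{\tau,\CN}\circ\CN(\cdot)]_a)(\Phi)-(\mathrm{id}_R\otimes\mathrm{res}_a)(\Phi)\rVert_1\le O(\sqrt{\delta})$, and the standard bound relating the diamond distance of two channels to the trace distance of their normalised Choi states --- which costs exactly a factor $d_{\text{code}}$, since the maximally mixed reference weights each direction of the code by $1/d_{\text{code}}$ --- upgrades this to $\lVert\CP_{\tau,\CN}\circ\CN(\rho)|_a-\rho_a\rVert_1\le d_{\text{code}}\,O(\sqrt{\delta})$ uniformly in $\rho$. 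Carrying the numerical constants through the chain --- the conversions between $1-F$ and $1-F^2$, the squaring in Barnum--Knill, and the factors of $\tfrac12$ and $2$ in Fuchs--van de Graaf --- yields the stated bound $d_{\text{code}}\sqrt{8\delta}$. Everything except the operator-algebra extension of Barnum--Knill in the second paragraph is routine bookkeeping with fidelities and dimensions.
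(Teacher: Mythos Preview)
Your overall architecture --- Barnum--Knill for the average case, Fuchs--van de Graaf to pass between fidelity and trace distance, and a factor of $d_{\text{code}}$ to go from average to worst case --- matches the paper's. But the step you flag as the main obstacle, namely extending Barnum--Knill to the setting where the target of recovery is a von Neumann subalgebra $\CM_a$, is one the paper simply \emph{avoids}. Neither of your proposed attacks (block-by-block or dilation to a subspace code) is used, and no operator-algebra version of Barnum--Knill is ever proved.

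The paper instead does two things you do not. First, it applies the \emph{original} Barnum--Knill theorem, unchanged, to the commuting ensemble $\rho^{i_\alpha}_\alpha=\ket{i_\alpha}\bra{i_\alpha}\otimes\tau_{\bar\alpha}$ with weights $d_{\bar\alpha}/d_{\text{code}}$, whose average is exactly $\tau$. These states already lie in $S(\CM_a)$, so for them $\rho_a=\rho$ and the hypothesis on $\CD'$ bounds the fidelities on the right-hand side of Barnum--Knill directly. This yields the average bound
\[
\sum_{i_\alpha,\alpha}\frac{d_{\bar\alpha}}{d_{\text{code}}}\,\lVert\CZ(\rho^{i_\alpha}_\alpha)-\rho^{i_\alpha}_\alpha\rVert_1^2\le 8\delta,\qquad \CZ:=\CP_{\tau,\CN}\circ\CN,
\]
with no modification to Barnum--Knill whatsoever.

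Second --- and this is the idea your proposal is missing --- the paper observes that $\CZ$ is \emph{self-adjoint} as a superoperator, precisely because the reference state in the Petz map is maximally mixed. Writing $\phi_a=\sum\lambda_{i_\alpha}\ket{i_\alpha}\bra{i_\alpha}\otimes\mathds{1}_{\bar\alpha}$, one then has $\lVert\CZ^\dagger(\phi_a)-\phi_a\rVert_\infty\le\sum_{i_\alpha,\alpha}|\lambda_{i_\alpha}|\,d_{\bar\alpha}\lVert\CZ(\rho^{i_\alpha}_\alpha)-\rho^{i_\alpha}_\alpha\rVert_1$, which Cauchy--Schwarz against the average bound turns into $\lVert\CZ^\dagger(\phi_a)-\phi_a\rVert_\infty\le d_{\text{code}}\sqrt{8\delta}\,\lVert\phi_a\rVert_\infty$; this is where the factor $d_{\text{code}}$ enters. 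Duality between $\lVert\cdot\rVert_1$ and $\lVert\cdot\rVert_\infty$ then gives the worst-case bound for \emph{all} $\rho\in S(\CH_{\text{code}})$, not merely those in $S(\CM_a)$. The paper emphasises that controlling the latter alone is not enough in general operator-algebra codes (absent complementary recovery), and the self-adjointness of $\CZ$ is exactly what bridges the gap. Your Choi-state/diamond-norm route would accomplish the same average-to-worst-case conversion \emph{if} your restricted-output Barnum--Knill inequality held, but you have not established it; the self-adjointness trick is what lets the paper sidestep that question entirely.
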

Note that our bound on the error when reconstructing the reduced state using the Petz map $\CP_{\tau, \CN}$ is significantly higher than the original error $\delta$. Not only is the error proportional to $\sqrt{\delta}$, but it is also proportional to the dimension of the code space $d_{\text{code}}$. As we shall see in Section~\ref{fullproof}, the square root appears because of inefficiencies in converting between trace distances and fidelities using the Fuchs-van de Graaf inequalities~\cite{fuchs1999cryptographic}, while the factor of $d_\text{code}$ appears in order to convert a bound on the average-case error into a bound on the worst-case error.  Nevertheless, so long as the error using the original decoding channel $\mathcal{D}'$ is non-perturbatively small, the Petz map error will also be non-perturbatively small, provided the dimension of the code space does not grow superpolynomially in $N$. For most code spaces of interest, such as perturbations about the vacuum, the code space dimension will be $O(1)$, and so this factor of code space dimension is not problematic. We discuss very large code spaces, such as those containing large numbers of black hole microstates, briefly in Section \ref{sec:discuss}. However, so long as we confine ourselves to perturbative excitations of quantum fields in a fixed gravitational background, the Petz map can always be trusted -- no twirling is required.

\section{Proof of Theorem \ref{mainthm}}
\label{fullproof}
The spirit of Theorem \ref{mainthm} follows that of \citet{barnum2002reversing}, who proved the following theorem for ordinary subspace quantum error correction:
\begin{thm}[Barnum-Knill~\cite{barnum2002reversing}]
\label{originalBK}
Given any pair of quantum channels $\CD'$, $\CN$, and ensemble of commuting density matrices $(p_k,\rho_k)$ whose sum $\sum_k p_k\rho_k = \rho$, the Petz map 
$$
\CP_{\rho,\CN}[\cdot] :=\rho^{1/2}\CN^\dagger\left[\CN(\rho)^{-1/2}(\cdot)\CN(\rho)^{-1/2}\right]\rho^{1/2}
$$
with reference state $\rho$, satisfies
\begin{align} \label{eq:BK}
\sum_{k} p_k F(\rho_k, \CP_{\CN,\rho}\circ\CN)\ge (\sum_{k} p_kF(\rho_k, \CD'\circ\CN))^2.
\end{align}
Here, the entanglement fidelity $F(\sigma,\CZ)$ is defined by
$$F(\sigma,\CZ) :=  \bra{\sigma}V_{\CZ}^\dagger \left(\ket{\sigma}\bra{\sigma} \otimes \mathds{1}_E \right) V_{\CZ}\ket{\sigma},$$
where $\ket{\sigma} \in \mathcal{H}_\text{code} \otimes \mathcal{H}_R$ is a purification of $\sigma \in S(\mathcal{H}_\text{code})$ and  $V_{\CZ}: \mathcal{H}_\text{code} \to \mathcal{H}_\text{code} \otimes \mathcal{H}_E$ is a Stinespring extension of the channel $\CZ: S(\mathcal{H}_\text{code}) \to S(\CH_\text{code})$.  

\end{thm}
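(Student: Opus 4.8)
The plan is to recognize Theorem~\ref{originalBK} as the operator-recovery analogue of the Hausladen--Wootters bound for the ``pretty good measurement,'' which states that the success probability of the square-root measurement is at least the square of the optimal success probability. Barnum and Knill's insight is that the same Cauchy--Schwarz mechanism survives the passage from classical message recovery, where the figure of merit is \emph{linear} in the decoder, to quantum channel recovery, where the entanglement fidelity is \emph{quadratic}, provided one uses the transpose/Petz channel as the canonical decoder. The engine is therefore a double application of Cauchy--Schwarz, and the whole argument turns on arranging two saturating features: the $\rho^{1/2}$ and $\CN(\rho)^{-1/2}$ factors built into the Petz map, and the trace preservation of the competitor channel $\CD'$.

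First I would fix a Kraus representation $\CN(\cdot)=\sum_a N_a(\cdot)N_a^\dagger$ and invoke Schumacher's formula $F(\rho_k,\CZ)=\sum_l|\tr(\rho_k Z_l)|^2$ for any channel $\CZ$ with Kraus operators $\{Z_l\}$, noting that this expression is independent of the chosen Kraus decomposition. The Kraus operators of $\CD'\circ\CN$ are then $\{D_b N_a\}$, and those of $\CP_{\rho,\CN}\circ\CN$ are $\{\rho^{1/2}N_c^\dagger\CN(\rho)^{-1/2}N_a\}$, since the Petz map has Kraus operators $\rho^{1/2}N_c^\dagger\CN(\rho)^{-1/2}$. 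Next, using the equality case of Cauchy--Schwarz, I would write $\sqrt{F(\rho_k,\CD'\circ\CN)}=\tr(\rho_k X_k)$ for the optimal unit combination $X_k=\sum_{a,b}\overline{v^{(k)}_{ab}}\,D_bN_a$ with $\sum_{a,b}|v^{(k)}_{ab}|^2=1$; this linearises the quadratic objective one factor at a time. The commuting hypothesis $[\rho_k,\rho_{k'}]=0$, hence $[\rho_k,\rho]=0$, now enters decisively: it lets the $\rho^{1/2}$ appearing in the Petz operators pass through each $\rho_k$, so that after summing against the weights $p_k$ the ensemble average reorganises into Hilbert--Schmidt inner products anchored on the single operator $\rho$. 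Applying Cauchy--Schwarz over the combined Kraus and ensemble indices, one factor is arranged to equal $\sum_k p_k F(\rho_k,\CP_{\rho,\CN}\circ\CN)$, precisely because the Petz operators carry the matched weights $\rho^{1/2}(\cdot)\CN(\rho)^{-1/2}$, while the remaining factor is controlled by the completeness relation $\sum_b D_b^\dagger D_b=\mathds{1}$ together with $\CN(\rho)=\sum_a N_a\rho N_a^\dagger$, which caps it at $1$. Assembling the two estimates gives $\sum_k p_k F(\rho_k,\CD'\circ\CN)\le\sqrt{\sum_k p_k F(\rho_k,\CP_{\rho,\CN}\circ\CN)}$, i.e.\ the claimed inequality \eqref{eq:BK}.

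I expect the main obstacle to be precisely the step that the classical pretty-good-measurement argument does not have to face: because the fidelity is quadratic in the decoder, the two Cauchy--Schwarz applications must be coordinated so that the arbitrary operators $D_b$ of $\CD'$ are paired against the \emph{fixed} Petz operators in a way that simultaneously saturates to the Petz fidelity on one side and leaves a normalisation that trace preservation of $\CD'$ can bound by $1$ on the other. A secondary but genuine difficulty is the bookkeeping around $\CN(\rho)^{-1/2}$, which is only a pseudo-inverse on $\supp\CN(\rho)$: I would insert the support projector explicitly and check that restricting to this support costs no fidelity, using that $N_a\rho^{1/2}$ already lies in $\supp\CN(\rho)$. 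Finally, since \eqref{eq:BK} holds for an \emph{arbitrary} channel $\CD'$, the stated form follows immediately, and I would flag exactly where commutativity is used, since a non-commuting ensemble would break the reorganisation that lets $\rho^{1/2}$ migrate through the $\rho_k$.
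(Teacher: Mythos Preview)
The paper does not actually prove Theorem~\ref{originalBK}; it is quoted verbatim from \cite{barnum2002reversing} and used as a black box inside the proof of Theorem~\ref{mainthm} (specifically in Proposition~\ref{average1}). There is therefore no ``paper's own proof'' to compare against.

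That said, your sketch is a faithful outline of the original Barnum--Knill argument: Schumacher's Kraus formula $F(\rho_k,\CZ)=\sum_l|\tr(\rho_k Z_l)|^2$, the identification of the Petz Kraus operators as $\rho^{1/2}N_c^\dagger\CN(\rho)^{-1/2}$, the linearisation of $\sqrt{F}$ via the Cauchy--Schwarz equality case, and the second Cauchy--Schwarz whose ``leftover'' factor collapses to $1$ by trace preservation of $\CD'$ together with $\sum_a N_a\rho N_a^\dagger=\CN(\rho)$. You have correctly flagged the two delicate points --- the role of commutativity in letting $\rho^{1/2}$ pass through each $\rho_k$, and the need to interpret $\CN(\rho)^{-1/2}$ as a pseudo-inverse on $\supp\CN(\rho)$ --- and both are handled exactly as you describe in the original. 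Nothing is missing from your plan at the level of a sketch; the only work that remains is the index bookkeeping in the double Cauchy--Schwarz, which is routine once the pairing you describe is written out.
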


If we now assume that $\CD'$ is a recovery channel for $\CN$ that works with high fidelity, then Theorem~\ref{originalBK} states that $\CP_{\CN,\rho} \circ \CN$ is close to the identity when measured using the average entanglement fidelity of an ensemble $\{\rho_k\}$ with average state $\rho$. Note that, unlike our Theorem \ref{mainthm}, there is no factor of $d_{\text{code}}$ in the size of the error for the Petz map $\CP_{\CN,\rho}$ as compared to the original decoding channel $\CD'$. Instead, \eqref{eq:BK} implies that the error, measured using the \emph{average entanglement fidelity}, increases by at most a factor of two.\footnote{An entanglement fidelity $F(\rho, \CD \circ \CN) = 1$ implies perfect recovery of a purification of $\rho$. Hence, we can naturally quantify the recovery error when decoding using the channel $\CD'$ by $$\delta  = 1 - \sum_{k} p_kF(\rho_k, \CD'\circ\CN).$$ The equivalent error measure, when decoding using the Petz map $\CP_{\rho,\CN}$, will then be bounded by $2\,\delta$.} The factor of $d_{\text{code}}$ will appear when we convert this average error into a worst-case error.

For concreteness, let us write down an explicit basis for the von Neumann subalgebra $\CM_a$. The exact description of $\CJ$ and $\CM_A$ (and hence $\CN$) are unimportant for our purposes. It is a fact about finite-dimensional von Neumann algebras (see, for example,~\cite{harlow2017ryu}) that we can always find a set of Hilbert spaces $\mathcal{H}_{\alpha}$ and $\mathcal{H}_{\bar \alpha}$, parameterized by $\alpha$, such that
\begin{align}
\begin{split}
\CM_{a}&= \bigoplus_\alpha \CB(\CH_{\alpha}) \otimes \mathds{1}_{\bar{\alpha}},\\
\CH_{code} &= \bigoplus_{\alpha} \CH_\alpha\otimes\CH_{\bar{\alpha}}.
 \end{split}
\end{align}
Note that
\begin{equation} \label{eq:dims}
\sum_\alpha d_\alpha d_{\bar{\alpha}} = d_\text{code},
\end{equation}
where $d_\alpha$, $d_{\bar{\alpha}}$ and $d_\text{code}$ are the dimensions of $\CH_\alpha$, $\CH_{\bar{\alpha}}$ and $\CH_\text{code}$ respectively. In this basis, any state $\rho_a \in S(\CM_a)$ can be parameterized as
\begin{equation} \label{eq:rho_a}
\rho_a = \sum_\alpha p_\alpha \rho_\alpha \otimes \tau_{\bar{\alpha}} = \sum_{\alpha,i_\alpha} p_\alpha p_{i_\alpha}^{(\alpha)} \ket{i_\alpha}\bra{i_\alpha} \otimes \tau_{\bar{\alpha}},
\end{equation}
where the states $\tau_{\bar{\alpha}} \in S(\CH_{\bar{\alpha}})$ are maximally mixed, $\rho_\alpha \in S(\CH_\alpha)$ are normalized density matrices, $p_\alpha$ and $p_i^{(\alpha)}$ are normalized probability distributions, and $\ket{i_\alpha}$ forms an orthonormal basis for $\mathcal{H}_\alpha$. 

We now have all the ingredients we need to begin a proof of Theorem \ref{mainthm}. Let $\CZ = \CP_{\tau,\CN} \circ \CN$. We first note that $\CZ$ is a self-adjoint superoperator. For any operator $\phi$,
\begin{align}
\tr[\phi \,\CZ(\rho)] &=  \frac{1}{d_\text{code}} \tr\left[\phi\,\CN^\dagger\left(\CN(\tau)^{-1/2}\,\CN(\rho)\,\CN(\tau)^{-1/2}\right)\right]\\
&= \tr[\CZ(\phi) \rho] = \tr[\phi \,\CZ^\dagger(\rho)].
\end{align} 
Hence we have that $\CZ = \CZ^\dagger$. Note that this argument relied crucially on our choice for the reference state in the Petz map $\CP_{\tau,\CN}$ as the maximally mixed state.

Now, let $\phi_a \in \CM_a$ be a Hermitian operator, which we can assume to have eigendecomposition 
\begin{align}
\phi_a= \lambda_{i_\alpha}\ket{i_\alpha}\bra{i_\alpha}.
\end{align}
We can bound the operator norm
\begin{align} \label{eq:step1}
 \lVert\CZ^\dagger(\phi_a) - \phi_a \rVert_\infty &\le \lVert\CZ^\dagger(\phi_a) - \phi_a \rVert_1 \\
&\le \sum_{\alpha, i_\alpha} | \lambda_{i_\alpha}|\  \lVert(\CZ^\dagger- \mathds{1}) [\ket{i_\alpha}\bra{i_\alpha} \otimes \mathds{1}_{\bar{\alpha}}]\rVert_1\\
&=  \sum_{\alpha, i_\alpha} | \lambda_{i_\alpha}|\ d_{\bar{\alpha}} \lVert\CZ[\rho^{i_\alpha}_{\alpha}]-\rho^{i_\alpha}_{\alpha}\rVert_1,
\end{align}
where the first inequality uses the monotonicity of the Schatten p-norms, the second inequality used the triangle inequality, and in the final equality we factored out $d_{\bar{\alpha}}$ so that $\rho^{i_\alpha}_{\alpha} = \ket{i_{\alpha}}\bra{i_\alpha}\otimes \tau_{\bar{\alpha}}$ are normalized states, and more importantly we used the fact that the channel $\CZ$ is self-adjoint. We now simply need to bound the average trace norm error of the channel $\CZ$ on states $\rho_a \in S(\CM_a)$. This is quadratically controlled by Theorem~\ref{originalBK}:

\begin{prop}\label{average1}
\begin{align}
\sum_{i_\alpha,\alpha} \frac{d_{\bar{\alpha}}}{d_{code}}  \lVert\CZ[\rho^{i_\alpha}_{\alpha}]-\rho^{i_\alpha}_{\alpha}\rVert_1^2 \le 8\delta
\end{align}
\end{prop}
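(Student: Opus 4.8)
\emph{Proof plan.} The plan is to replace the trace-distance errors by entanglement fidelities, feed the resulting average into the Barnum--Knill bound \eqref{eq:BK}, show that the competing decoder $\CD'$ has average fidelity at least $1-\delta$, and reassemble using $1-(1-\delta)^2\le 2\delta$. First I would check that $\{(p_{i_\alpha,\alpha},\rho^{i_\alpha}_\alpha)\}$ with $p_{i_\alpha,\alpha}:=d_{\bar\alpha}/d_{\text{code}}$ is an admissible ensemble for Theorem~\ref{originalBK} with reference state $\tau$: the $\rho^{i_\alpha}_\alpha=\ket{i_\alpha}\bra{i_\alpha}\otimes\tau_{\bar\alpha}$ are simultaneously diagonal (hence commuting), the weights sum to $\sum_\alpha d_\alpha d_{\bar\alpha}/d_{\text{code}}=1$ by \eqref{eq:dims}, and $\sum_{i_\alpha,\alpha}p_{i_\alpha,\alpha}\rho^{i_\alpha}_\alpha=\tfrac{1}{d_{\text{code}}}\sum_\alpha\mathds{1}_\alpha\otimes\mathds{1}_{\bar\alpha}=\tau$, so the Petz map in Theorem~\ref{originalBK} is exactly the $\CP_{\tau,\CN}$ of $\CZ=\CP_{\tau,\CN}\circ\CN$.

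Next I would carry out the trace-to-fidelity conversion. For any channel $\CW$ and state $\sigma$, fixing a purification $\ket{\sigma}\in\CH_{\text{code}}\otimes\CH_R$, recall $F(\sigma,\CW)=\bra{\sigma}(\mathrm{id}_R\otimes\CW)(\ket{\sigma}\bra{\sigma})\ket{\sigma}$, i.e.\ $F(\sigma,\CW)$ is the squared Uhlmann fidelity between the pure state $\ket{\sigma}\bra{\sigma}$ and $(\mathrm{id}_R\otimes\CW)(\ket{\sigma}\bra{\sigma})$; the Fuchs--van de Graaf inequalities~\cite{fuchs1999cryptographic} then give $\tfrac12\lVert(\mathrm{id}_R\otimes\CW)(\ket{\sigma}\bra{\sigma})-\ket{\sigma}\bra{\sigma}\rVert_1\le\sqrt{1-F(\sigma,\CW)}$, and monotonicity of the trace norm under tracing out $R$ yields $\lVert\CW(\sigma)-\sigma\rVert_1\le 2\sqrt{1-F(\sigma,\CW)}$. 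Taking $\CW=\CZ$, squaring, and averaging against the $p_{i_\alpha,\alpha}$ (which sum to $1$) bounds the left-hand side of Proposition~\ref{average1} by $4\bigl(1-\sum_{i_\alpha,\alpha}p_{i_\alpha,\alpha}F(\rho^{i_\alpha}_\alpha,\CZ)\bigr)$; by Theorem~\ref{originalBK} the averaged fidelity here is at least $\bigl(\sum_{i_\alpha,\alpha}p_{i_\alpha,\alpha}F(\rho^{i_\alpha}_\alpha,\CD'\circ\CN)\bigr)^2$. Everything then follows once we show $\sum_{i_\alpha,\alpha}p_{i_\alpha,\alpha}F(\rho^{i_\alpha}_\alpha,\CD'\circ\CN)\ge 1-\delta$, since $1-(1-\delta)^2\le 2\delta$ (the claim being trivial if $\delta\ge 1$, as the left-hand side is then at most $4$).

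This last estimate is the main obstacle, and it is precisely where the operator-algebra (rather than subspace) character of the code intervenes. The hypothesis only controls the ordinary trace distance $\lVert\CD'\circ\CN(\rho)-\rho_a\rVert_1<\delta$, which bounds the \emph{state} fidelity between $\CD'\circ\CN(\rho^{i_\alpha}_\alpha)$ and $\rho^{i_\alpha}_\alpha$ by $1-\delta/2$ via Fuchs--van de Graaf, but the \emph{entanglement} fidelity entering \eqref{eq:BK} additionally requires that the correlations of $\rho^{i_\alpha}_\alpha$ with its purifying reference be restored, and since $\rho^{i_\alpha}_\alpha=\ket{i_\alpha}\bra{i_\alpha}\otimes\tau_{\bar\alpha}$ is pure on the logical factor $\CH_\alpha$ and maximally mixed on the complementary factor $\CH_{\bar\alpha}$, those correlations sit entirely within $\CH_{\bar\alpha}$ and need not survive $\CN$ at all. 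I would circumvent this by running the Barnum--Knill estimate block by block over the superselection sectors $\alpha$, with $\CH_\alpha$ in the role of the code subspace and $\CH_{\bar\alpha}$ absorbed into the purifying reference, so that the object compared is a \emph{subsystem} entanglement fidelity: the ideal channel and every recovery channel then share a common, fixed $\CH_{\bar\alpha}$ contribution which cancels, leaving a quantity that sees only the subalgebra $\CM_a$ and is therefore $\ge 1-\delta$ on average by the hypothesis (after, if necessary, post-composing $\CD'$ with the trace-preserving conditional expectation onto $\CM_a$, which only increases the fidelity). Establishing that \eqref{eq:BK} persists under this replacement --- the operator-algebra refinement of Barnum--Knill announced in the introduction --- is the technical heart of the proof; granted it, the chain of inequalities above closes to give $\sum_{i_\alpha,\alpha}\tfrac{d_{\bar\alpha}}{d_{\text{code}}}\lVert\CZ[\rho^{i_\alpha}_\alpha]-\rho^{i_\alpha}_\alpha\rVert_1^2\le 8\delta$.
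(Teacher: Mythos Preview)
Your overall plan---the ensemble $\{(d_{\bar\alpha}/d_{\text{code}},\rho^{i_\alpha}_\alpha)\}$ averaging to $\tau$, Fuchs--van de Graaf on the outside, Theorem~\ref{originalBK} in the middle---matches the paper's proof exactly. The paper writes the whole chain as a four-line display, landing on $4-4(1-\delta/2)^4\le 8\delta$, and invokes Fuchs--van de Graaf twice and Barnum--Knill once, precisely as you outline.

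Where you diverge is in worrying about the step from the entanglement fidelity $F(\rho^{i_\alpha}_\alpha,\CD'\circ\CN)$ back to the trace-distance hypothesis. The paper simply writes this as another application of Fuchs--van de Graaf, without distinguishing the entanglement fidelity that appears in Theorem~\ref{originalBK} from the ordinary state fidelity that the bound $\lVert\CD'\circ\CN(\rho)-\rho_a\rVert_1<\delta$ actually controls. Your concern is legitimate: since $\rho^{i_\alpha}_\alpha=\ket{i_\alpha}\bra{i_\alpha}\otimes\tau_{\bar\alpha}$ is maximally mixed on $\CH_{\bar\alpha}$, its entanglement fidelity probes correlations with the purifying reference that the hypothesis on $\rho_a$ alone does not see. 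However, your proposed cure---an operator-algebra/subsystem refinement of Barnum--Knill run block by block in $\alpha$, with $\CH_{\bar\alpha}$ absorbed into the reference---is left as an unproved assertion (you explicitly say ``granted it''). So relative to the paper's proof you have \emph{identified} a subtlety rather than resolved one: the architecture is the same, and the one place where you go beyond the paper is precisely the place where your argument is incomplete.
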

\begin{proof}
We first note that
\begin{equation} \label{eq:mix}
\sum_{i_\alpha,\alpha} \frac{d_{\bar{\alpha}}}{d_\text{code}} \rho^{i_\alpha}_\alpha = \tau.
\end{equation}
Hence
\begin{align}
\sum_{i_\alpha,\alpha} \frac{d_{\bar{\alpha}}}{d_{code}}  \lVert\CZ[\rho^{i_\alpha}_{\alpha}]-\rho^{i_\alpha}_{\alpha}\rVert_1^2 & \le 4\sum_{ i_\alpha,\alpha} \frac{d_{\bar{\alpha}}}{d_{\text{code}}}(1-F(\rho^{i_\alpha}_{\alpha} ,\CZ[\rho^{i_\alpha}_{\alpha} ]))\\
&\le 4 -  4(\sum_{\alpha} \frac{d_{\bar{\alpha}}}{d_{\text{code}}}F(\rho^{i_\alpha}_{\alpha}, \CD'\circ\CN[\rho^{i_\alpha}_{\alpha}]))^2\\
&\le 4 -  4\left(\sum_{\alpha} \frac{d_{\bar{\alpha}}}{d_{\text{code}}}\left(1-\frac{1}{2}\lVert\CD'\circ\CN[\rho^{i_\alpha}_\alpha]-\rho^{i_\alpha}_\alpha\rVert_1\right) \right)^4\\
&\le 8\delta,
\end{align}
where the first inequality uses one of the Fuchs-van de Graaf inequalities~\cite{fuchs1999cryptographic}, the second uses \eqref{eq:mix} and Theorem~\ref{originalBK}, the fourth again uses the Fuchs-van de Graaf inequalities, and the fifth uses our assumption $\lVert\CD'\CN(\rho) - \rho_{a}\rVert_1< \delta$ and \eqref{eq:dims}.
\end{proof}
Applying Proposition \ref{average1} to \eqref{eq:step1}, we find 
\begin{align}
 \sum_{\alpha, i_\alpha} | \lambda_{i_\alpha}| \ d_{\bar{\alpha}} \lVert\CZ[\rho^{i_\alpha}_{\alpha}]-\rho^{i_\alpha}_{\alpha}\rVert_1&\leq \lVert \phi_a \rVert_\infty \sum_{\alpha, i_\alpha} \sqrt{d_{\bar{\alpha}} d_\text{code}} \cdot \sqrt{\frac{d_{\bar{\alpha}}}{d_\text{code}}}   \lVert\CZ[\rho^{i_\alpha}_{\alpha}]-\rho^{i_\alpha}_{\alpha}\rVert_1\\
&\le \lVert \phi_a \rVert_\infty\sqrt{\sum_{\alpha, i_\alpha} d_{\bar{\alpha}} d_\text{code}}  \cdot \sqrt{8\delta}\\
& = \lVert\phi_a\rVert_\infty \,d_\text{code}  \cdot \sqrt{8\delta},
\end{align}
where, in the first inequality, we used the fact that $\lVert \phi_a \rVert_\infty \geq |\lambda_{i_\alpha}|$ for all $\lambda_{i_\alpha}$ and, in the second inequality, we used the Cauchy-Schwarz inequality. We therefore find that
$$\lVert\CZ^\dagger(\phi_a) - \phi_a\rVert_\infty \le \lVert\phi_a\rVert_\infty d_{code} \sqrt{8\delta}.$$
Now, since
\begin{align}
\lVert \CZ(\rho)_a - \rho_a \rVert_1 &=
\sup_{\phi_a \in \CB(\CH_a)} \frac{1}{\lVert \phi_a \rVert_\infty} \tr([\CZ^\dagger(\phi_a) - \phi_a]\rho) \leq  \sup_{\phi_a \in \CB(\CH_a)} \frac{1}{\lVert \phi_a \rVert_\infty} \lVert\CZ^\dagger(\phi_a) - \phi_a\rVert_\infty,
\end{align}
we immediately arrive at our desired result
\begin{align} \label{eq:final}
\lVert(\CP_{\tau,\CN}\circ \CN[\rho])_a-\rho_a\rVert_1=\lVert(\CZ[\rho])_a-\rho_a\rVert_1 \le d_{code} \sqrt{8\delta},
\end{align}
for any state $\rho \in S(\CH_\text{code})$.

Note that we could have directly seen from Proposition \ref{average1} (using the triangle inequality) that for any state $\rho_a \in S(\mathcal{M}_a)$ we have
\begin{align} \label{eq:rhoa_only}
\lVert \CZ(\rho_a)_a - \rho_a \rVert_1 \leq \sqrt{8\delta d_\text{code}}.
\end{align}
However, although this is a tighter bound than \eqref{eq:final}, it only applies to states in the code space that are of the form given in \eqref{eq:rho_a}. In the Heisenberg picture, we want our reconstructed operator to work for all states in the code space -- not just states in $S(\CM_a)$.

The same problem of extending reconstruction from states $\rho_a \in S(\CM_a)$ to all states $\rho\in S(\CH_\text{code})$ was previously encountered for the twirled Petz map in~\cite{cotler2017entanglement}. It was shown that the approximate equality between bulk and boundary relative entropies~\cite{jafferis2016relative} implies that any state $\rho \in S(\CH_\text{code})$ satisfies
\begin{align} \label{eq:comp}
\CN(\rho) \approx \CN(\rho_a).
\end{align}
Hence \eqref{eq:rhoa_only} implies that, for all states $\rho \in S(\CH_\text{code})$, we have
\begin{align} \label{eq:holo}
\lVert \CZ(\rho)_a - \rho \rVert_1 \leq \sqrt{8\delta d_\text{code}} + \varepsilon,
\end{align}
where $\varepsilon$ is independent of $d_\text{code}$ and $\varepsilon \to 0$ in the limit $N\to\infty$. 
However, \eqref{eq:comp} really only holds because of the complementary recovery property of AdS/CFT. Not only does region $A$ learn everything about the entanglement wedge $a$, it also learns nothing about the complementary bulk region $\bar a$, which is the entanglement wedge of region $\bar A$. In general, operator algebra quantum error correcting codes will not even approximately satisfy \eqref{eq:comp} -- consider, for example, the case where $\CN$ is the identity channel and $\CM_a$ is any proper subalgebra of the algebra of observables $\CB(\CH_\text{code})$. It follows that \eqref{eq:holo} is specific to holographic codes. In contrast, Theorem \ref{mainthm} is a completely general fact about operator algebra quantum error correction. \textbf{Theorem \ref{mainthm} is therefore a true extension of the range of validity of the Petz map as a general-purpose, approximate recovery channel to operator algebra and subsystem codes.}

\section{Discussion} \label{sec:discuss}
In this work, we proved a theorem in quantum error correction about the quality of decoding using the Petz map as a general recovery channel.  Our theorem generalizes the work of~\citet{barnum2002reversing} to the case of operator algebra quantum error correction, and subsystem quantum error correction.

By applying our theorem to AdS/CFT, we showed that entanglement wedge reconstruction can be achieved using the Petz map, so long as the dimension of the code space we expect to be able to reconstruct is not too large. In particular, the Petz map constitutes a good recovery map provided the code space dimension does not grow superpolynomially in the limit of large $N$. In practice, this is almost always the case for code spaces of interest. 

It is worth commenting briefly on the major exception to this rule: code spaces containing large numbers of black hole microstates\footnote{For a detailed discussion of this topic see, for example,~\cite{hayden2018learning}.}. The entropy of such code spaces may, in general, be $O(1/G_N)$. Hence the dimension of the code space may be exponential in $N$. However, as yet, the only black hole microstates that we understand are generic, equilibrium microstates. For code spaces made out of such microstates, we would expect the worst-case and average reconstruction errors to approximately agree, even though the in-principle large code space dimension means that very large differences between these two fidelities are possible. It is therefore reasonable to hope that the Petz map will still be valid for entanglement wedge reconstruction. On the other hand we should not trust any semi-classical description of non-generic, finely-tuned black hole microstates, and thus entanglement wedge reconstruction might not be possible for such states. As such, there are no known situations in which entanglement wedge reconstruction is possible, yet we cannot use the Petz map to achieve it.

While we emphasized the utility of the Petz map over other reconstruction techniques, we have not made any serious attempt to actually evaluate the Petz map in particular cases. Though the Petz map is much simpler to write down and, in principle, evaluate than the twirled Petz map, there still remain significant obstacles to doing so. Let us briefly discuss the challenges involved. We wish to explicitly evaluate
\begin{align}
\Phi_A = \frac{1}{d_{code}} \tau^{-1/2}_A [J \phi_a J^\dagger]_A \,\tau_A^{-1/2}.
\end{align}
The operator $ J \phi_a J^\dagger$ can be found by taking the global HKLL boundary reconstruction $\Phi^\text{HKLL}$ and projecting it into the code space~\cite{cotler2017entanglement}
\begin{align}
 J \phi_a J^\dagger = P_\text{code} \Phi^\text{HKLL} P_\text{code}.
\end{align}
Therefore, the main challenge lies in finding the restriction of this operator to region $A$. For simplicity, we assume, in accordance with common practice (though not with reality) that the CFT Hilbert space factorizes as $\CH_{CFT} \cong \CH_A \otimes \CH_{\bar{A}}$ with $\CM_A \cong \CB(\CH_A)$; the restriction map is then simply a partial trace over $\CH_{\bar{A}}$. One difficulty arises because the HKLL procedure gives an operator $\Phi$ that is not localized in time. To take the partial trace over region $\bar A$, we need to use the Heisenberg equations of motion to rewrite $\Phi_A$ in terms of operators at time zero\footnote{For a more detailed discussion of similar issues, see~\cite{cotler2017entanglement}.}. Such operators will in general be very complicated and hard to evaluate. Essentially, the obstruction is simply the usual obstruction to evaluating quantities that are not protected by symmetry on the boundary side of AdS/CFT. Strongly coupled quantum field theories are simply hard to deal with; thankfully, there also exists a weakly coupled bulk. 
\section{Acknowledgements}
We would like to thank Patrick Hayden, Richard Nally, and Michael Walter for valuable discussions. We would also like to thank Howard Barnum for insightful and stimulating conversation in the early stages of this project. CFC was supported by the Physics/Applied Physics/SLAC Summer Research Program for undergraduates at Stanford University. GP is supported by the Simons Foundation ``It from Qubit'' collaboration, AFOSR grant number FA9550-16-1-0082 and {DOE award DE-SC0019}. GS was supported by an IQIM postdoctoral fellowship at Caltech, and by the Stanford Institute for Theoretical Physics.

\bibliographystyle{unsrtnat}
\bibliography{biblio}
\end{document}